\newcommand{\mathsym}[1]{{}}
\let\oldbfseries=\bfseries
\let\oldmdseries=\mdseries
\let\oldnormalfont=\normalfont
\renewcommand{\bfseries}{\oldbfseries\boldmath}
\renewcommand{\mdseries}{\oldmdseries\unboldmath}
\renewcommand{\normalfont}{\oldnormalfont\unboldmath}
\numberwithin{equation}{section}
\newcommand\hypersetup[1]{}\fi
\DeclareMathSymbol{\Gamma}{\mathalpha}{letters}{"00}
\DeclareMathSymbol{\Delta}{\mathalpha}{letters}{"01}
\DeclareMathSymbol{\Theta}{\mathalpha}{letters}{"02}
\DeclareMathSymbol{\Lambda}{\mathalpha}{letters}{"03}
\DeclareMathSymbol{\Xi}{\mathalpha}{letters}{"04}
\DeclareMathSymbol{\Pi}{\mathalpha}{letters}{"05}
\DeclareMathSymbol{\Sigma}{\mathalpha}{letters}{"06}
\DeclareMathSymbol{\Upsilon}{\mathalpha}{letters}{"07}
\DeclareMathSymbol{\Phi}{\mathalpha}{letters}{"08}
\DeclareMathSymbol{\Psi}{\mathalpha}{letters}{"09}
\DeclareMathSymbol{\Omega}{\mathalpha}{letters}{"0A}
\newcommand{\dd}{\mathrm{d}}
\newcommand{\ii}{\mathrm{i}}
\newcommand{\set}[1]{\{#1\}}
\newcommand{\beq}{\begin{equation}}
\newcommand{\eeq}{\end{equation}}
\def\[{\begin{equation}}
\def\]{\end{equation}}
\def\<{\begin{eqnarray}}
\def\>{\end{eqnarray}}
\def\mr@ignsp#1 {\ifx\:#1\@empty\else #1\expandafter\mr@ignsp\fi}%
\newcommand{\multiref}[1]{\begingroup
\xdef\mr@no@sparg{\expandafter\mr@ignsp#1 \: }%
\def\mr@comma{}%
\@for\mr@refs:=\mr@no@sparg\do{\mr@comma\def\mr@comma{,}\ref{\mr@refs}}%
\endgroup}
\newcommand{\hypref}[2]{\ifx\href\asklfhas #2\else\href{#1}{#2}\fi}
\newcommand{\Secref}[1]{Section~\multiref{#1}}
\newcommand{\Figref}[1]{Figure~\multiref{#1}}
\newcommand{\figref}[1]{Fig.~\multiref{#1}}
\renewcommand{\eqref}[1]{(\multiref{#1})}
\newlength{\apb@width}
\newcommand{\autoparbox}[2][c]{\settowidth{\apb@width}{#2}\parbox[#1]{\apb@width}{#2}}
\asklfhas\newcommand{\href}[2]{#2}\fi
\newtheorem{theorem}{Theorem}
\newtheorem{lemma}{Lemma} 
\newtheorem{remark}{Remark}
\newtheorem{proposition}{Proposition}
\newtheorem{example}{Example}
\newtheorem{definition}{Definition}
\newcommand{\N}{\mathbb{N}} 
\newcommand{\Z}{\mathbb{Z}} 
\newcommand{\vc}[1]{\mathbf{#1}}
\def\cE{{\cal E}} 
\def\Lam{\Lambda} 
\def\cP{{\cal P}} 
\def\cI{{\cal I}}
\def\tp{t'}
\def\CT{\mbox{CT}} 
\def\yp{y'} 
\def\ypp{y''} 
\def\a{\alpha} 
\def\l{\lambda} 
\def\s{\sigma}
\def\sb{\bar{\sigma}}
\def\om{\omega}
\def\b{\beta}
\def\o{\omega}
\def\d{\delta}
\begin{document}

\renewcommand{\thefootnote}{\fnsymbol{footnote}}
\thispagestyle{empty}
\begin{flushright}\footnotesize
\end{flushright}
\vspace{1cm}

\begin{center}%
{\Large\bfseries%
\hypersetup{pdftitle={Constant term solution for arbitrary number of osculating lattice paths}}%
Constant term solution for an arbitrary \\ number of  osculating lattice paths%
\par} \vspace{2cm}%

\textsc{Richard Brak$^a$ and Wellington Galleas$^{b}$}\vspace{5mm}%
\hypersetup{pdfauthor={Richard Brak, Wellington Galleas}}%

$^a$ \textit{Department of Mathematics \\%
The University of Melbourne\\%
Parkville, VIC 3052, Australia}\vspace{3mm}%


$^b$\textit{Institute for Theoretical Physics and Spinoza Institute, \\ Utrecht University, Leuvenlaan 4,
3584 CE Utrecht, \\ The Netherlands}\vspace{3mm}%

\verb+r.brak@ms.unimelb.edu.au+, %
\verb+w.galleas@uu.nl+ %

\par\vspace{2.5cm}

\textbf{Abstract}\vspace{7mm}

\begin{minipage}{12.7cm}
Osculating paths are sets of directed lattice paths which are 
not allowed to cross each other or have common edges, but are allowed
to have common vertices. In this work we derive a constant term formula for the number of 
such lattice paths by solving a set of simultaneous difference equations.

\hypersetup{pdfkeywords={Osculating paths, Alternating sign matrices, Bethe ansatz}}%
\hypersetup{pdfsubject={}}%
\end{minipage}
\vskip 1.5cm
{\small Mathematics Subject Classifications: 05A15 }
\vskip 0.1cm
{\small Keywords: Osculating paths, Constant term,  Bethe ansatz}
\vskip 1.5cm
{\small June 2013}

\end{center}

\newpage
\renewcommand{\thefootnote}{\arabic{footnote}}
\setcounter{footnote}{0}


\section{Introduction}
\label{sec:intro}

Enumerative combinatorics is basically concerned with the problem of counting
configurations of objects under specified restrictions. Some configurations 
such as the number of combinations of $m$ objects taken $n$ at a time
can be easily counted, but the enumeration of certain kinds of configurations 
are highly non-trivial problems. In particular, this is the case for systems formed by
interacting objects such as vicious and osculating walkers.

Lattice paths generated by vicious and osculating walkers have attracted
a lot of interest over the last decades both in combinatorics and statistical mechanics.
For instance, vicious lattice paths are known to be related to combinatorial objects such as
plane partitions \cite{gessel89, stembridge90, stembridge95}, Young tableaux 
\cite{guttmann98, guttmann00, guttmann03} and symmetric functions \cite{brenti93},
just to name a few connections. From the physical perspective, vicious lattice paths
are also known to offer a good description of polymers \cite{fisher84}. 

Osculating lattice paths in their turn are useful for the description of polymers 
collapse transition \cite{essam03}, but they are also able to describe objects of purely
combinatorial interest. They have been introduced in \cite{brak:1997wo} and are also well known
to be intimately associated  with the combinatorial problem of enumerating alternating sign matrices (ASM).
Alternating sign matrices are square $n\times n$ matrices
whose entries are either $0$, $+1$ or $-1$  such that the non-zero elements in each row and column alternate
between $+1$ and $-1$ and begin and end with $+1$. The total number of $n\times n$ ASM was firstly conjectured by
Robbins, Rumsey and Mills \cite{mills1983a-r,robbins1991a-r} and subsequently proved by
Zeilberger \cite{Zeilberger96} who related it to a particular class of plane partitions. 
These partitions had been enumerated by Andrews \cite{andrews94} based on a result
previously obtained by Stembridge \cite{stembridge:1995rt}.
A shorter derivation was subsequently obtained by Kuperberg \cite{kuperberg95} using the results of Izergin \cite{izergin87} 
and Korepin \cite{korepin82}. We also remark here that another proof based on a formula counting the number of
particular monotones triangles is also available \cite{fischer05}.

In this work we consider the problem of enumerating osculating lattice paths for an arbitrary number of
osculating walkers by establishing a set of partial difference equations counting the number of configurations.
This method has been previously discussed in \cite{brak99} and the solution is obtained 
through a modified version of the celebrated Bethe ansatz \cite{bethe31}.
Although the Bethe ansatz was initially proposed in the study of spin chains, it is 
worth mentioning that the ideas behind it have also been applied in a variety
of contexts. For instance, in the case of the Asymmetric Simple Exclusion Process,
the Bethe ansatz method has resulted in an integral formula for its probabilities \cite{Tracy:2008fk}. 
In our case, however, the solution assumes the form of a constant term formula which allows for a straightforward evaluation.

This paper is organised as follows. In \Secref{sec:osc} we describe the problem of 
osculating lattice paths and establish the conventions used throughout this work. 
In \Secref{sec:proofs} we describe the enumeration problem in terms of partial difference equations
and also present its solution.The \Secref{sec:conclusions} is left for concluding remarks.

\section{Osculating paths}
\label{sec:osc}

Let $S_{n}$ be the group of permutations of $n$ objects with 
$\s=\s (1) \s (2) \ldots \s (n) \in S_{n}$ and let $\sb$ be the inverse of 
$\s$.  We use  the standard notation: $\Z$ is the set of integers, $\N$ is the set of positive integers and $[k]=\set{1,2,\dots k}$. 
Our constant term solution will be intimately connected to the inversion set $C_{\s}$ of a permutation $\s$ which is defined as
\begin{equation}\label{eq:inv}
 C_{\s}=\set{(\a,\b)\in [n]\times[n] \, :\,\text{$\a<\b$ and $\s (\a) >\s (\b)$ }} \; .
\end{equation}
Lattice paths and its osculating case are then defined as follows.
\begin{definition}[Lattice Path]\label{def:lpat}
A lattice path $p$ of length $t\in\N$ on $\Pi=\Z\times\Z$  is a sequence of vertices $v_{0} v_{1} \ldots v_{t}$, 
with $v_{i}\in \Pi$ and  $v_{i}-v_{i-1}\in \set{(1,-1),(1,1)}$ for all  $i\in \set{1,2\dots,t}$. 
If $v_{i}-v_{i-1}=(1,1)$, the step is called an ``up'' step and if $v_{i}-v_{i-1}=(1,-1)$, the step is called an ``down'' step. 
The height  of a vertex $v=(x,y)$ is the value $y$. For a particular path $p$ we denote the corresponding sequence of steps
by $e_{1}e_{2}\ldots e_{t}$ with $e_{i}=(v_{i-1},v_{i})$ for all  $i\in \set{1,2\dots,t}$. 
The height of a step is the height of its left vertex. 
\end{definition}

\begin{definition}[Osculating and Non-intersecting Sequences] 
Let $\{y_{\a}\}_{\a=1\ldots n}$ be a sequence of integers with $y_{1}< y_{2}<\dots< y_{n}$.   
Such a sequence is called a non-intersecting sequence. On the other hand, a sequence of integers with 
$y_{1}\le y_{2}\le\dots\le y_{n}$ such that no three consecutive values are equal,
i.e. if $y_{\a}=y_{\a+1}$ then $y_{\a-1}<y_{a}$ and $y_{\a+1}<y_{\a+2}$,  is called an \textbf{osculating sequence} 
and a pair for which $y_{\a}=y_{\a+1}$ is called an \textbf{osculation}.  
\end{definition} 

The combination of the above definitions allows us to define osculating lattice paths as follows.
We consider lattice paths starting at heights with the same parity in order to prevent paths from stepping across
each other. Without loss of generality we can assume the initial heights to have even parity and
that they are non-intersecting. The parity of the ending heights must then be the same as the parity of the number
of steps and also non-intersecting. These considerations lead to the following definition.

\begin{definition}[Osculating paths]
\label{def_oscPaths}
Let $A_{\alpha}=(0,\yp_{\alpha})$ and $B_{\alpha}=(t,y_{\alpha})$ with $t\in \N$ be the starting and ending
vertices respectively of the $n$-tuple $(w_{1},\ldots,w_{n})$ of lattice paths in $\Pi$ such that
the following conditions hold for all $\a\in \set{1,2,\dots,n}$: 
\begin{enumerate} 
	\item The integers $y'_{\a}$ such that $y'_{\a}< y'_{\a+1}$ have even parity for all $\a\in[n-1]$.
	
	\item The integers $y_{\a}$ such that $y_{\a}< y_{\a+1}$ have the same parity as $t$ for all $\a\in[n-1]$.

	\item $w_{\alpha}$ is a $t$-step path from $A_{\alpha}$ to $B_{\alpha}$ for all $\a\in[n]$.
		
	\item The set $\{\ypp_{\a}\}_{\a=1\ldots n}$ is an osculating sequence  for $0<\tp<t$,  if $s_{\tp}=(\tp,\ypp_{\a})\in w_{\a}$.

	\item The paths $(w_{1},\ldots,w_{n})$ have no steps in common. 
\end{enumerate} 
Paths satisfying the above conditions are called $t$-step osculating paths starting at\\ $(A_{1},\dots, A_n)$ and ending at $(B_{1},\dots, B_n)$. 
\end{definition} 

\begin{figure}[htbp] \centering
\includegraphics{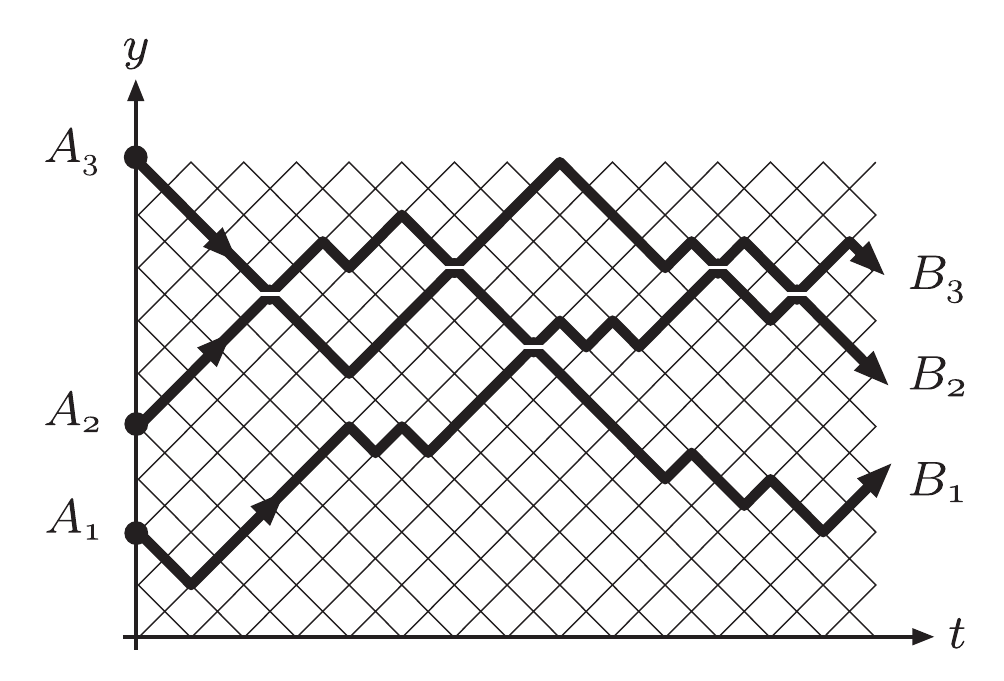}
\caption{An example of	three osculating paths.}
\label{fig:f1}
\end{figure}

For illustrative purposes, we give in \Figref{fig:f1} an example of three osculating paths.
Now in order to define the constant term operation we first need to define a variety of algebraic objects.
Let $\Z[x_{1} ,x_{2}, \ldots,x_{n}]$ be a ring of  polynomials in $x_1,\dots, x_n$ with coefficients
in $\Z$, which we will denote $\Z[x]$, and let $\Z[\om][x]$ be a the ring of polynomials in $x_1,\dots, x_n$
with coefficients in $\Z[\om]$. The corresponding Laurent polynomial rings are then $\Z[x, 1/x]$
and $\Z[\om][x, 1/x]$. In its turn the field of rational functions, i.e. ratios of polynomials
in $x_1,\dots, x_n$ with coefficients in $\Z[\om]$, will be denoted by $\Z[\om](x)$ and the constant term operation
is defined as follows.

\begin{definition}[Constant term] 
Let $R \in \Z[\om](x)$, then the constant term operation $\CT$ is defined as the iterated
contour integrals
\begin{equation}
\label{ctint}
\CT\bigl[R\bigr] =  \frac{1}{(2 \ii \pi)^n} \oint \frac{\dd x_n}{x_n} \left( \oint \frac{\dd x_{n-1}}{x_{n-1}} \left( \dots \left( \oint R \frac{\dd x_1}{x_1} \right) \dots \right)  \right) \; ,
\end{equation}
where the integration contours enclose the origin. 
\end{definition}

Now that we have defined osculating lattice paths and the constant term operation, we can state the main result 
of this paper.
\begin{theorem}
\label{thm1} Let $\l_{\a}=x_{\a}+1/x_{\a}$ and 
$\Lambda_{n}=\prod_{\a=1}^{n}\l_{\a}$. The total number of osculating sequences
for  $t$-step 
osculating paths starting at 
$\{A_{\a}\}_{n}$ and ending at $\{B_{\a}\}_{n}$  is given by
\begin{equation} 
\mathcal{R}_{t}(\om)=\CT \left[ \Lambda_{n}^{t}  \sum_{c_{\vc{\chi}}} \sum_{\sigma\in S_{n}} c_{\vc{\chi}}
\prod_{\alpha=1}^{n} x_{\alpha}^{ \chi_{\a} (y_{\bar{\s} (\a)}-\yp_{\a}) } \,\,
\prod_{( i,j ) \in C_{\sigma}} \left\{ - \frac{ \lambda_{i}\lambda_{j}-\o
x_{j}^{\chi_{j}} /x_{i}^{\chi_{i}}  } {\lambda_{i}\lambda_{j}-\o
x_{i}^{\chi_{i}} /x_{j}^{\chi_{j}}   } \right\}\right]
\label{eq1a}
\end{equation}
where $\chi_{\a} = \pm 1$, $C_{\s}$ is the set of inversions of $\s$, and the coefficients $c_{\vc{\chi}}$
are given by
\begin{equation} \label{ex1}
c_{\vc{\chi}} = \begin{cases}
1 \quad \quad \mbox{if} \;\; \vc{\chi} = ( -1, \dots , -1, \chi_{\a} , -1, \dots , -1 ) : \chi_{\a} = +1 , \; 1 \leq \a \leq \frac{n+1}{2} \cr
-1 \quad \; \mbox{if} \;\; \vc{\chi} = ( -1, \dots , -1, \chi_{\a} , -1, \dots , -1 ) : \chi_{\a} = +1 , \; \frac{n+1}{2} < \a \leq n \cr
0 \qquad \qquad \mbox{otherwise} 
\end{cases}
\end{equation}
for $n$ odd while
\begin{equation} \label{ex2}
c_{\vc{\chi}} = \begin{cases}
1 \quad \quad \mbox{if} \;\; \vc{\chi} = ( +1, -1, \dots , -1, \chi_{\a} , -1, \dots , -1 ) : \chi_{\a} = +1 , \; 2 \leq \a \leq \frac{n}{2}+1 \cr
-1 \quad \; \mbox{if} \;\; \vc{\chi} = ( -1, \dots , -1, \chi_{\a} , -1, \dots , -1 ) : \chi_{\a} = +1 , \; 1 \leq \a \leq \frac{n}{2}-1 \cr
0 \qquad \qquad \mbox{otherwise}  
\end{cases}
\end{equation}
for $n$ even. The variable $\o$ counts the number of osculations.
\end{theorem}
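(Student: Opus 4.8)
The plan is to translate the enumeration of $t$-step osculating paths into a system of partial difference equations for a counting function $R_t$ indexed by the ending heights, solve that system by a Bethe-ansatz-type superposition, and then recognise the resulting sum over scattering terms as a constant term of the explicit rational function in \eqref{eq1a}. The starting point is the observation that a single lattice path of length $t$ from a fixed source to a vertex of height $y$ satisfies the free recursion $R_t(y) = R_{t-1}(y-1) + R_{t-1}(y+1)$; in operator form this is diagonalised by the plane-wave basis $x^y$ with eigenvalue $\lambda = x + 1/x$, which already explains the prefactor $\Lambda_n^{\,t}$ in \eqref{eq1a}. For $n$ paths the free evolution is governed by $\Lambda_n^{\,t} = \prod_\alpha (x_\alpha + 1/x_\alpha)^t$, and the whole content of the theorem is in correctly implementing the osculating interaction (no common edges, common vertices allowed, no three-fold coincidences) as a boundary condition on the diagonal $y_\alpha = y_{\alpha+1}$, with each osculation weighted by $\o$.

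\textbf{Key steps.}
First I would set up the difference equations: define $R_t(y_1,\dots,y_n)$ as the number of osculating configurations ending at heights $y_1 \le \dots \le y_n$, and show that in the interior region $y_1 < \dots < y_n$ it obeys the free $n$-fold recursion, while on a wall $y_\alpha = y_{\alpha+1}$ it obeys a modified recursion in which the ``illegal'' term (the one that would force a shared edge, or a triple osculation) is deleted and the genuinely osculating term is retained with weight $\o$. Second, following the modified Bethe ansatz of \cite{brak99}, I would posit a solution of the form $\sum_{\sigma\in S_n} A_\sigma(x)\prod_\alpha x_\alpha^{\,\chi_\alpha y_{\bar\sigma(\alpha)}}$, substitute into the wall equations, and derive the two-body exchange relation: crossing a wall multiplies the amplitude by the scattering factor $-\bigl(\lambda_i\lambda_j - \o\, x_j^{\chi_j}/x_i^{\chi_i}\bigr)/\bigl(\lambda_i\lambda_j - \o\, x_i^{\chi_i}/x_j^{\chi_j}\bigr)$, so that $A_\sigma = \prod_{(i,j)\in C_\sigma}(\text{factor})$ up to an overall $\sigma$-independent function. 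Third, I would fix that overall function and the linear combination of sign-vectors $\vc\chi$ by imposing the initial condition at $t=0$ (the configuration is pinned at $\{A_\alpha\}$) together with the reflection symmetry $x_\alpha \mapsto 1/x_\alpha$ inherited from $\lambda_\alpha$ being even; matching these forces precisely the telescoping combination of coefficients $c_{\vc\chi}$ recorded in \eqref{ex1}–\eqref{ex2}, the odd/even split reflecting the parity constraints in Definition~\ref{def_oscPaths}. Finally, I would invert the plane-wave expansion: extracting the coefficient of $\prod_\alpha x_\alpha^{\,\yp_\alpha - y_{\bar\sigma(\alpha)}}$ is exactly the iterated contour integral $\CT[\,\cdot\,]$ in \eqref{ctint}, which yields \eqref{eq1a}.

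\textbf{Main obstacle.}
The delicate point is step one together with the wall analysis in step two: osculating paths forbid shared \emph{edges} but allow shared \emph{vertices}, and additionally forbid three coincident paths, so the boundary condition on $y_\alpha = y_{\alpha+1}$ is genuinely a two-body condition only because the no-triple-osculation rule lets one reduce any coincidence locus to disjoint pairwise touchings; verifying this reduction carefully — and checking that the resulting modified recursion is consistent where several walls meet (the Yang–Baxter-type compatibility of the scattering factors, which is what makes the ansatz over all of $S_n$ well defined rather than just over adjacent transpositions) — is where the real work lies. I expect that once the two-body scattering factor is shown to satisfy the braid relation, the superposition over $S_n$ closes automatically and the remaining combinatorial bookkeeping (fixing $c_{\vc\chi}$ from the $t=0$ data) is routine.
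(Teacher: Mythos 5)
Your proposal follows essentially the same route as the paper: the free recursion diagonalised by $\Lambda_n^t\,\vc{x}^{\vc{y}}$, the Bethe-ansatz superposition over $S_n$ with amplitudes $A_\sigma=\prod_{(i,j)\in C_\sigma}s_{ij}$, the doubling to $2^n$ solutions via $x_\alpha\mapsto 1/x_\alpha$, and the constant-term extraction to impose the $t=0$ pinning, which fixes the coefficients $c_{\vc\chi}$. The only notable difference is your assessment of where the work lies: the paper disposes of the wall/compatibility analysis in a one-line induction on the number of osculations (its Lemma~1), while the step you call routine — showing that the non-identity permutation terms have vanishing constant term and that the surviving terms cancel pairwise for the stated $c_{\vc\chi}$ — is the substantive part of its argument (Lemma~\ref{initial-lemma} and the accompanying Proposition).
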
 

\begin{example} Using formulaes (\ref{eq1a})-(\ref{ex2}) we find the following polynomials $\mathcal{R}_{t}(\om)$
for $t=2n$ and $y_{\alpha} = y'_{\alpha} = 2(\alpha -1)$.
\begin{itemize}
\item $n=2$: 
\[
\mathcal{R}_{4}(\om) = 20 + 8\om + \om^2
\]
\item $n=3$: 
\[
\mathcal{R}_{6}(\om) = 980 + 1260 \om + 656 \om^2 + 160 \om^3 + 22 \om^4 + 2 \om^5
\]
\item $n=4$: 
\<
\mathcal{R}_{8}(\om) &=& 232848 + 620928 \om + 733824 \om^2 + 499272 \om^3 + 217128 \om^4 \nonumber \\
&&+ \; 64876 \om^5 + 13657 \om^6 + 1974 \om^7 + 189 \om^8 + 18 \om^9 + \om^{10}
\>
\item $n=5$: 
\<
\mathcal{R}_{10}(\om) &=& 267227532 + 1214670600\om + 2549915280\om^2 + 3274813212\om^3 \nonumber \\
&&+ \; 2879827684\om^4 + 1844895472\om^5 + 895616536\om^6 + 337943000\om^7 \nonumber \\
&&+ \; 100663338\om^8 + 23882812\om^9 + 4536546\om^{10} + 694008\om^{11} \nonumber \\
&&+ \; 83888\om^{12} + 7892\om^{13} + 604\om^{14} + 46\om^{15} + 2\om^{16}
\>
\end{itemize}
\end{example}

\section{Partial difference equations approach}
\label{sec:proofs}   

The total number of osculating sequences given in Theorem $1$ 
satisfies a partial first order difference equation, in addition
to an osculation constraint and an initial condition. The osculation
constraint is also given by first order difference relations. 
In particular, if we have $n$ paths then the number of recurrence relations
associated to the osculation process equals the Fibonacci number $F_{n}$. 
In this way the proof of Theorem $1$ will rest on the exact solution of the
aforementioned conditions.

\paragraph{Matchings and osculating sequences.} We begin by constructing a matching
from an osculating sequence in the sense of graph theory. Let $Z_{n}$ be a linear graph
of $n$  vertices and let $\cP$ be the set of all matchings of $Z_{n}$. 
Also let $\{y_{\a}\}_{\a=1\ldots n}$ be an osculating sequence. Next we label the
$\a^{\text{th}}$ vertex of $Z_{n}$ by the corresponding variable $y_{\a}$ 
and colour the edge between $\a$ and $\a+1$ for all pairs such that
$y_{\a}=y_{\a+1}$. This defines a unique matching $M\in \cP$.  Now let $\cI_{M}$ be the set of 
isolated points in $M$ and $\cE_{M}$ the set of coloured edges in $M$.  
For a given configuration of osculating paths and a given horizontal 
coordinate we get a set of height coordinates $(y_1,\ldots,y_n)$.  
This set naturally defines an osculating sequence $\{y_{\a}\}_{\a=1\ldots n}$ according to Definition 2,
and an example of a matching is shown in \figref{fig:f2}.
\begin{figure}[htbp] 
\centering
\includegraphics{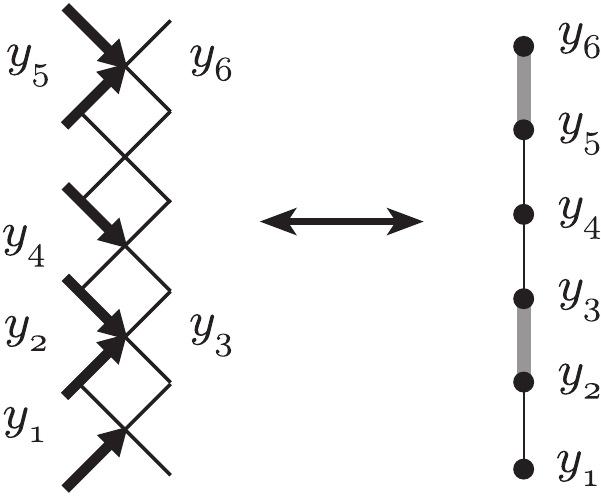}
\caption{An example of the matching obtained from a set of osculating 
vertices. Here $y_{2}=y_{3}$, $y_{5}=y_{6}$, $\cI_{M}=\{y_{1},y_{4}\}$ and $\cE_{M}=\{[y_{2},y_{3}], [y_{5},y_{6}]\}$.}
\label{fig:f2}
\end{figure}

\paragraph{Difference equations.} Let $r(\vc{y};t)$ be the osculation generating
function for the number of $t$-step osculating paths from $\{A_{\a}\}_{n}$ to $\{B_{\a}\}_{n}$
with $\vc{y}=(y_1,\ldots,y_n)$ and $\vc{e}=(e_1,\ldots,e_n)$. If \mbox{$y_{1}<y_{2}< \ldots <y_{n}$} 
then \begin{equation}
r(\vc{y};t+1)=  \sum_{e_1=\pm 1}\cdots
\sum_{e_n=\pm 1} 
r(\vc{y}+\vc{e};t).
\label{r1}
\end{equation}

\paragraph{Osculation constraint.} The osculation process is characterised 
by another difference relation in addition to (\ref{r1}). 
Since $\vc{y}$ defines an osculating sequence
$\{y_{\a}\}_{\a=1\ldots n}$, and hence a matching, we have the condition
\begin{equation}
r(\vc{y};t+1)=\o^{|\cE_{M}|} \sum_{e_1\in v^{M}_1}\cdots
\sum_{e_n\in v^{M}_n} 
r(\vc{y}+\vc{e};t)
\label{r2}
\end{equation} 
for each $M\in \cP$ where
\begin{equation}
 v^{M}_{i}= 
 \begin{cases} 
 \{+1,-1\} & \text{if $y_{i}\in\cI_{M}$}, \\
 \{-1\} & \text{if  $y_{i}$ is a lower vertex of some 
 edge $\in \cE_{M}$ },\\
    \{+1\} & \text{if  $y_{i}$ is an upper vertex of some 
 edge $\in \cE_{M}$. }
 \end{cases}
 	\label{r3}
\end{equation}
The upper vertex of an edge in $\cE_{M}$ is the one with the 
greater $\a$ label and conversely for the lower vertex. 
Thus we have a number $F_n$ of equations (\ref{r2}) since $|\cP|=F_{n}$.

\paragraph{Initial condition.} The difference equations (\ref{r1}) and (\ref{r2})
still need to be complemented with an initial condition in order to have $r(\vc{y};t)$
completely determined. 
From the previous discussions we thus have the initial condition
\begin{equation}
r(\vc{y};0)=\prod_{\a=1}^{n}\delta_{y_{\a},\yp_{\a}} 
\label{r4}
\end{equation}
where $y_{\a}$ and $y'_{\a}$ satisfy the conditions of Definition 3. 
In (\ref{r4}) $\delta_{i,j}$ stands for the Kronecker
delta. In what follows we shall present a solution for the relations (\ref{r1})-(\ref{r4})
based on the celebrated Bethe ansatz \cite{bethe31}.

\bigskip

Let us now consider $\vc{x},\vc{x}^{\vc{y}}\in\Z[x,1/x]$ defined as $\vc{x}=\prod_{\a=1}^{n}x_{\a}$ and $\vc{x}^{\vc{y}}
=\prod_{\a=1}^{n}x_{\a}^{y_{\a}}$. Then the trial solution $\Lam_{n}^{t} \vc{x}^{\vc{y}}$ satisfies (\ref{r1})
provided that
\begin{equation}
\Lam_{n}(x_{1},\ldots,x_{n})=\prod_{\a=1}^{n}\bigl(x_{\a}+
x_{\a}^{-1}\bigr).
\label{r5}
\end{equation}
The term $\Lam_{n}(x_{1},\ldots,x_{n})$ is a symmetric function, i.e. it is invariant
under any permutation of its arguments. Thus $\Lam_{n}^{t} \vc{x}_{\s}^{\vc{y}}$
with $\vc{x}_{\s}^{\vc{y}}=\prod_{\a=1}^{n}x_{\s (\a) }^{y_{\a}}$
for any $\s\in S_{n}$ is also a solution of (\ref{r1}). Due to the linearity of
(\ref{r1}) we thus have that 
\begin{equation}   
\psi (\vc{y};t) = \Lam_{n}^{t} \sum_{\s\in S_{n}}A_{\s}(\vc{x}) \vc{x}_{\s}^{\vc{y}}
\label{r5b}
\end{equation}
solves (\ref{r1}) if $A_{\s}(\vc{x})$ is independent of $\vc{y}$.   

Up to this stage the function $A_{\s}(\vc{x})$ is arbitrary. However, we will see that
it can be conveniently written as elements of $\Z[\om](x)$ in order to satisfy (\ref{r2}). 
This will be of importance for the introduction of the constant term operation since this
has been only defined on $\Z[\om](x)$ (\ref{ctint}).

The Eq. (\ref{r2}) is satisfied by $\psi (\vc{y};t)$ as defined by (\ref{r5b}) if 
\begin{equation}
\sum_{\s\in S_{n}} 
A_{\s}\biggl(\Lam_{n}
-\o^{|\cE_{M}|} \sum_{e_1\in v^M_1}\cdots \sum_{e_n\in v^M_n} 
\vc{x}_{\s}^{\vc{e}}\biggr)
\vc{x}_{\s}^{\vc{y}}=0
\label{at}
\end{equation} 
for each $M\in\cP$. The choice of elements $A_{\s}$ ensuring (\ref{at}) is given by the following lemma.
\begin{lemma}
If
\begin{equation}
A_{\s} = \prod_{ ( \alpha , \beta  ) \in C_{\sigma} } s_{\alpha \beta}
\label{r6}
\end{equation}
where
\begin{equation}
s_{\a\b}= - \frac{\l_{\a}\l_{\b}-\o x_{\b}/x_{\a}}{\l_{\a}\l_{\b}-\o x_{\a}/x_{\b}}
\label{r7}
\end{equation}
Then, 
\begin{equation}
 \sum_{\s\in S_{n}} 
 A_{\s}\biggl(\Lam_{n}
 -\o^{|\cE_{M}|} \sum_{e_1\in v^M_1}\cdots \sum_{e_n\in v^M_n} 
\vc{x}_{\s}^{\vc{e}}\biggr)
 \vc{x}_{\s}^{\vc{y}}=0
 \label{r8}
\end{equation} 
holds for arbitrary $\vc{y}$ and every $M\in\cP$. Note that $s_{\a,\b}\in \Z[w](x)$ after multiplying 
numerator and denominator by suitable $x_\a x_\b$ factors.
\end{lemma}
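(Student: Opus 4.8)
The plan is to verify \eqref{r8} by reducing the equation with $F_n$ terms (one for each matching $M$) to a manageable collection of constraints on the coefficients $A_\s$, and then to check those constraints using the explicit form \eqref{r6}--\eqref{r7}. First I would observe that the bracketed quantity multiplying $\vc{x}_\s^{\vc{y}}$ depends on the matching $M$ only through the local structure at each isolated point or coloured edge. Since $\Lam_n=\prod_\a(x_\a+x_\a^{-1})$ factorises over the $n$ coordinates and the inner sum $\sum_{e_i\in v_i^M}\vc{x}_\s^{\vc{e}}$ also factorises, the term in parentheses becomes a product over $\a$ of factors that are either $(x_{\s(\a)}+x_{\s(\a)}^{-1}) - (x_{\s(\a)}+x_{\s(\a)}^{-1}) = 0$ for an isolated vertex, or $(x_{\s(i)}+x_{\s(i)}^{-1})(x_{\s(j)}+x_{\s(j)}^{-1}) - \o\, x_{\s(i)}^{-1}x_{\s(j)}$ for the lower/upper endpoints $i<j$ of a coloured edge (with the $\o$ coming from $\o^{|\cE_M|}$ distributed one factor per edge). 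So for any $M$ with at least one isolated point the whole term vanishes trivially, and the only nontrivial matchings are the perfect matchings of $Z_n$ — but $Z_n$ is a path graph, so it has a perfect matching only when $n$ is even, and then only the "all edges $\{2k-1,2k\}$" one. Thus the real content is a single identity when $n$ is even, and for odd $n$ one reduces to matchings leaving exactly one isolated point; in all cases the problem localises to a nearest-neighbour pair $(i,i+1)$.

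The second step is therefore the key reduction: \eqref{r8} holds for every $M$ if and only if, for each adjacent pair, the sum over $S_n$ of $A_\s\,\vc{x}_\s^{\vc{y}}$ weighted by the pair-factor $f_{\s(i)\s(i+1)} := \lambda_{\s(i)}\lambda_{\s(i+1)} - \o\, x_{\s(i)}^{-1}x_{\s(i+1)}$ vanishes (after the isolated-vertex factors have cancelled), for arbitrary $\vc{y}$. Because $\vc{y}$ ranges freely, I would group the permutation sum into pairs $\{\s, \s'\}$ where $\s' = \s\cdot\tau_{i,i+1}$ differs by the adjacent transposition swapping positions $i$ and $i+1$. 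For such a pair, $\vc{x}_\s^{\vc{y}}$ and $\vc{x}_{\s'}^{\vc{y}}$ are related by interchanging $x_{\s(i)}\leftrightarrow x_{\s(i+1)}$ in the two relevant exponents, and the inversion sets $C_\s$ and $C_{\s'}$ differ only in whether the pair $(\s^{-1}(i),\s^{-1}(i+1))$-type inversion is present — so $A_{\s'}/A_\s = s_{\s(i+1)\s(i)}^{\pm1}$ up to the shared factors. The whole identity then collapses to the single algebraic relation, for $a=x_{\s(i)}$, $b=x_{\s(i+1)}$,
\begin{equation}
s_{ba}\bigl(\lambda_a\lambda_b - \o\, a^{-1}b\bigr) + \bigl(\lambda_a\lambda_b - \o\, b^{-1}a\bigr) = 0 \quad\text{(up to the symmetric }\vc{y}\text{-dependent factor),}
\end{equation}
which is exactly the defining property of \eqref{r7}: $s_{ba} = -\dfrac{\lambda_a\lambda_b - \o\, a^{-1}b}{\lambda_a\lambda_b - \o\, b^{-1}a}$, i.e. $s_{\a\b}$ is precisely the two-body "scattering" amplitude that makes the pairwise combination cancel.

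I would then assemble these pieces: the factorisation of the $M$-dependent bracket handles the isolated points, the pairing $\s\leftrightarrow\s\tau_{i,i+1}$ reduces each coloured edge to the two-body relation, and \eqref{r7} is exactly that relation, so \eqref{r8} holds term-by-term in $\vc{y}$. The final remark — that $s_{\a\b}\in\Z[\om](x)$ after clearing denominators by $x_\a x_\b$ — is immediate: multiplying numerator and denominator of \eqref{r7} by $x_\a x_\b$ turns $\lambda_\a\lambda_\b = (x_\a+x_\a^{-1})(x_\b+x_\b^{-1})$ into a Laurent polynomial times $x_\a x_\b$, giving a ratio of genuine polynomials in $x$ over $\Z[\om]$. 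The main obstacle I anticipate is bookkeeping: correctly tracking how $C_\s$ changes under $\tau_{i,i+1}$ in terms of the \emph{labels} $\s(i),\s(i+1)$ versus the \emph{positions}, and making sure the orientation convention (lower vertex $\to -1$, upper vertex $\to +1$, "upper" meaning larger $\a$-label) matches the $b/a$ versus $a/b$ placement in $s_{\a\b}$; a sign or an inverted ratio there is the easy way to get a wrong—but plausible-looking—answer. A clean way to avoid this is to first prove the two-path ($n=2$) case completely by hand, pinning down all conventions, and then note that the general case only ever invokes one adjacent pair at a time, so nothing beyond the $n=2$ computation is needed beyond the combinatorial reduction.
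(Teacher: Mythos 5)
Your two-body identity $s_{ba}\bigl(\l_a\l_b-\o a^{-1}b\bigr)+\bigl(\l_a\l_b-\o b^{-1}a\bigr)=0$ is correct and really is the engine of the lemma, but the reduction you use to reach it contains a genuine error. The bracket $\Lam_n-\o^{|\cE_M|}\sum_{e_1\in v^M_1}\cdots\sum_{e_n\in v^M_n}\vc{x}_{\s}^{\vc{e}}$ is a \emph{difference of two products} over the coordinates, and a difference of products is not the product of the coordinatewise differences. The isolated vertices do not contribute zero factors; they contribute the common factor $\prod_{\a\in\cI_M}\l_{\s(\a)}$ to both terms, so the bracket equals $\Lam_n\bigl(1-\prod_{[i,j]\in\cE_M}\o\,x_{\s(i)}^{-1}x_{\s(j)}/(\l_{\s(i)}\l_{\s(j)})\bigr)$ and vanishes identically only for the empty matching. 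Hence your conclusion that ``any $M$ with an isolated point vanishes trivially, so only the perfect matching of $Z_n$ matters'' is false: for $n=3$ the matching with the single edge $\{1,2\}$ and vertex $3$ isolated yields the nontrivial constraint $\sum_{\s}A_{\s}\,\l_{\s(3)}\bigl(\l_{\s(1)}\l_{\s(2)}-\o x_{\s(1)}^{-1}x_{\s(2)}\bigr)\vc{x}_{\s}^{\vc{y}}=0$, and such mixed matchings are the generic case. Moreover your scheme says nothing about matchings with two or more coloured edges (the unique perfect matching for $n$ even has $n/2$ edges and does not localise to one adjacent pair). There you need a telescoping such as $\prod_m A_m-\prod_m a_m=\sum_m\bigl(\prod_{l<m}a_l\bigr)(A_m-a_m)\bigl(\prod_{l>m}A_l\bigr)$, with $A_m=\l_{\s(i_m)}\l_{\s(j_m)}$ and $a_m=\o x_{\s(i_m)}^{-1}x_{\s(j_m)}$, so that each summand is killed separately by the pairing $\s\leftrightarrow\s\tau_{i_m,i_m+1}$; this is precisely the induction on $|\cE_M|$ that the paper invokes (it gives no further detail), and it is the step your argument replaces with an incorrect trivialisation.

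A second point you should not gloss over with ``up to the symmetric $\vc{y}$-dependent factor'': the monomials $x_{\s(i)}^{y_i}x_{\s(i+1)}^{y_{i+1}}$ and $x_{\s(i+1)}^{y_i}x_{\s(i)}^{y_{i+1}}$ coincide only when $y_i=y_{i+1}$. Already for $n=2$ with the single-edge matching the full sum evaluates to $\bigl(\l_1\l_2-\o x_2/x_1\bigr)\bigl(x_1^{y_1}x_2^{y_2}-x_1^{y_2}x_2^{y_1}\bigr)$, which is nonzero whenever $y_1\neq y_2$. So the cancellation genuinely requires $y_i=y_{i+1}$ on every coloured edge of $M$, i.e.\ $\vc{y}$ must be an osculating sequence whose induced matching is (at least) $M$ --- which is the only situation in which the osculation constraint is applied, but it means the lemma's phrase ``arbitrary $\vc{y}$'' must be read with that restriction, and your proof has to state where it is used. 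Your closing remarks on the bookkeeping of $C_{\s}$ under adjacent transpositions and on clearing denominators are reasonable, but the proof as proposed does not establish the lemma for any matching other than a single isolated edge with $n=2$.
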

\paragraph{Proof.} The lemma is, although tediously, readily proved by induction on $|\cE_{M}|$.   

\bigskip

Expressions like (\ref{r5b})-(\ref{r8}) are known as Bethe ansatz and they appear in a variety 
of contexts and versions. See for instance \cite{baxter82} for applications of the Bethe ansatz
in Exactly Solvable Models of statistical mechanics. It is also worth to stress here the correspondence
between $\psi (\vc{y};t)$, with $A_{\s}$ given by (\ref{r6}) and (\ref{r7}), 
and Bethe's wave function for the six-vertex model with toroidal boundary conditions \cite{Lieb67,baxter82}. 
Although here the variables $x_\alpha$ are not constrained by Bethe ansatz equations, as it happens
for the six-vertex model, the variable $\omega$ could still be related to the six-vertex model anisotropy parameter
$\Delta$ to find a correspondence between $\psi (\vc{y};0)$ and the six-vertex model wave function. More precisely, if we consider
the conventions of \cite{baxter82} we then have $x^2_{\alpha} = 2 \Delta z_{\alpha} -1$ and $\omega = 4 \Delta^2 - 1$.

The function $\psi (\vc{y};t)$ satisfy the conditions (\ref{r1}) and (\ref{r2}), and for last we need 
to consider the initial condition (\ref{r4}). In the traditional Bethe ansatz technique one would look
for equations constraining the variables $x_{\a}$ for that. Here we find that such approach is not 
suitable and instead we shall consider the constant term operation. Before proceeding with this analysis
we first need to remark a discrete symmetry of (\ref{r5}). 
We notice the function $\Lam_n$ is also invariant under the mapping $x_{\a} \rightarrow \frac{1}{x_{\a}}$
which implies that $\psi (\vc{y};t)$ is still a solution of (\ref{r1}) and (\ref{r2}) under this operation.
Thus we can define a set of functions $\psi_{\chi} (\vc{y};t)$ corresponding to $\psi (\vc{y};t)$ 
with the replacement $x_{\a} \rightarrow x_{\a}^{\chi_{\a}}$ where $\chi_{\a}$ can assume the values $\pm 1$.
This yields $2^n$ solutions of (\ref{r1}) and (\ref{r2}) which can be linearly combined 
to satisfy (\ref{r4}). Moreover, since $\psi_{\chi} (\vc{y};t) \in \Z[\om](x)$, the constant
term $\CT\bigl[ \psi_{\chi} \bigr]$ also satisfies (\ref{r1}) and (\ref{r2}). This can be
readily seen from the integral formula (\ref{ctint}). We have now gathered all
the ingredients to present a solution for the total number of osculating sequences.
\begin{lemma}  
\label{initial-lemma}
Let 
\begin{equation}
r(\vc{y};t)=\CT\Bigl[ \sum_{\vc{\chi}} c_{\vc{\chi}} \; \vc{x}^{- \vc{\chi} \cdot \vc{y'}} \Lam_n^t
\sum_{\s\in S_{n}} A_{\s}^{\vc{\chi}} \; \vc{x}_{\s}^{\vc{\chi} \cdot \vc{y}} \Bigr]
\label{r5c}
\end{equation}
where $\vc{x}^{- \vc{\chi} \cdot \vc{y'}} = \prod_{\a=1}^{n} x_{\a}^{- \chi_{\a} y'_{\a}}$,
$\vc{x}_{\s}^{\vc{\chi} \cdot \vc{y}} = \prod_{\a=1}^{n} x_{\s(\a)}^{ \chi_{\s(\a)} y_{\a}}$, $\chi_{\a} = \pm 1$ and
\begin{equation}
A_{\s}^{\vc{\chi}} = \prod_{ ( \alpha , \beta  ) \in C_{\sigma} } s_{\alpha \beta}^{\vc{\chi}}
\label{r5d}
\end{equation}
with
\begin{equation}
s_{\a\b}^{\vc{\chi}} = - \frac{\l_{\a} \l_{\b} -\o x_{\b}^{\chi_{\b}} /x_{\a}^{\chi_{\a}} }{\l_{\a} \l_{\b}  -\o x_{\a}^{\chi_{\a}} /x_{\b}^{\chi_{\b}}} \; .
\label{r5e}
\end{equation}
For $n$ odd and coefficients
\begin{equation}
\label{oddchi}
c_{\vc{\chi}} = \begin{cases}
1 \quad \quad \mbox{if} \;\; \vc{\chi} = ( -1, \dots , -1, \chi_{\a} , -1, \dots , -1 ) : \chi_{\a} = +1 , \; 1 \leq \a \leq \frac{n+1}{2} \cr
-1 \quad \; \mbox{if} \;\; \vc{\chi} = ( -1, \dots , -1, \chi_{\a} , -1, \dots , -1 ) : \chi_{\a} = +1 , \; \frac{n+1}{2} < \a \leq n \cr
0 \qquad \qquad \mbox{otherwise} \; ,  
\end{cases}
\end{equation}
the initial condition (\ref{r4}) is satisfied. For $n$ even the initial condition (\ref{r4}) requires
\begin{equation}
\label{evenchi}
c_{\vc{\chi}} = \begin{cases}
1 \quad \quad \mbox{if} \;\; \vc{\chi} = ( +1, -1, \dots , -1, \chi_{\a} , -1, \dots , -1 ) : \chi_{\a} = +1 , \; 2 \leq \a \leq \frac{n}{2}+1 \cr
- 1\quad \; \mbox{if} \;\; \vc{\chi} = ( -1, \dots , -1, \chi_{\a} , -1, \dots , -1 ) : \chi_{\a} = +1 , \; 1 \leq \a \leq \frac{n}{2}-1 \cr
0 \qquad \qquad \mbox{otherwise} \; .  
\end{cases}
\end{equation}
\end{lemma}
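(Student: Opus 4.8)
The plan is to verify the initial condition \eqref{r4} directly by computing $r(\vc{y};0)$ from the candidate formula \eqref{r5c} and showing it reduces to $\prod_{\a}\delta_{y_{\a},\yp_{\a}}$. At $t=0$ the factor $\Lam_n^t$ disappears, so the bracket in \eqref{r5c} becomes $\sum_{\vc\chi}c_{\vc\chi}\sum_{\s\in S_n}A_{\s}^{\vc\chi}\prod_{\a}x_{\a}^{\chi_\a(y_{\bar\s(\a)}-\yp_\a)}$ after re-indexing the product $\vc x_\s^{\vc\chi\cdot\vc y}$ via $\bar\s$. The first step is to argue that, because each $s_{\a\b}^{\vc\chi}$ is a ratio whose numerator and denominator are homogeneous of the same degree once cleared by $x_\a x_\b$, the only term in the $\s$-sum that can contribute a nonzero constant term is forced; more precisely, I would expand $A_{\s}^{\vc\chi}=\prod_{(\a,\b)\in C_\s}s_{\a\b}^{\vc\chi}$ as a power series in the $x$'s around the chosen contour and track which monomials $x_\a^{0}$ survive. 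The heart of the matter is a telescoping/cancellation argument: the alternating signs built into $c_{\vc\chi}$ (the "$+1$ up to $\frac{n+1}{2}$, $-1$ beyond" pattern, and its even-$n$ analogue) are engineered so that the contributions of neighbouring $\vc\chi$'s cancel except for the diagonal term $y_{\bar\s(\a)}=\yp_\a$, i.e. $\s=\mathrm{id}$ together with $\vc y=\vc{y'}$.

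Concretely I would proceed as follows. First, fix $\s$ and a sign vector $\vc\chi$ and compute $\CT$ of the single summand; using that the contour encloses the origin, this constant term is a sum over ways of choosing, from each factor $s_{\a\b}^{\vc\chi}=-\big(\l_\a\l_\b-\o x_\b^{\chi_\b}/x_\a^{\chi_\a}\big)/\big(\l_\a\l_\b-\o x_\a^{\chi_\a}/x_\b^{\chi_\b}\big)$, a Laurent monomial, so that the total $x$-degree in each variable cancels the monomial $x_\a^{\chi_\a(y_{\bar\s(\a)}-\yp_\a)}$. Second, I would show that for $\s\neq\mathrm{id}$ the inversion set $C_\s$ is nonempty and the $s_{\a\b}^{\vc\chi}$ factors have a pole/zero structure that makes the $\vc\chi$-summation (with the prescribed $c_{\vc\chi}$) vanish identically — this is where the precise form of \eqref{oddchi}, \eqref{evenchi} enters and where I expect to mirror whatever combinatorial identity underlies the known ASM/six-vertex result (the correspondence noted after Lemma~1 with Bethe's wave function for the six-vertex model is the guiding analogy). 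Third, for $\s=\mathrm{id}$ we have $A_{\mathrm{id}}^{\vc\chi}=1$, and the summand collapses to $\sum_{\vc\chi}c_{\vc\chi}\prod_\a x_\a^{\chi_\a(y_\a-\yp_\a)}$, whose constant term is $\sum_{\vc\chi}c_{\vc\chi}\prod_\a\delta_{\chi_\a(y_\a-\yp_\a),0}=\sum_{\vc\chi}c_{\vc\chi}\prod_\a\delta_{y_\a,\yp_\a}$. Finally, since in each of \eqref{oddchi} and \eqref{evenchi} the coefficients $c_{\vc\chi}$ sum to $+1$ (there is exactly one more $+1$ than $-1$: for $n$ odd, $\frac{n+1}{2}$ plus-terms and $n-\frac{n+1}{2}=\frac{n-1}{2}$ minus-terms; for $n$ even, $\frac n2$ plus-terms and $\frac n2-1$ minus-terms), this gives exactly $\prod_\a\delta_{y_\a,\yp_\a}$, as required. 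I should also separately confirm that the two $\s$-and-$\vc\chi$ labellings used in \eqref{ex1}--\eqref{ex2} versus \eqref{oddchi}--\eqref{evenchi} are consistent under the $\bar\s$ re-indexing, which is a routine bookkeeping check.

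The step I expect to be the main obstacle is the second one: proving that all $\s\neq\mathrm{id}$ contributions cancel after summing over $\vc\chi$ with the given signs. This requires understanding how the rational factors $s_{\a\b}^{\vc\chi}$ behave when one flips a single $\chi_\gamma$ from $-1$ to $+1$, and then organizing the $n$ (resp. $n-1$ on the minus side) flips into a telescoping cascade; the non-intersecting/even-parity constraints on $\{\yp_\a\}$ and $\{y_\a\}$ from Definition~3 are what make the residues line up. A clean way to package this may be to introduce, for each ordered pair, the "scattering" rewriting $s_{\a\b}^{\vc\chi}x_\a^{\cdots}x_\b^{\cdots}$ and to recognize the $\vc\chi$-sum as a discrete boundary operator annihilating the bulk Bethe states — essentially a reflection-type identity — so that only the fully-reflected diagonal term survives. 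I would first carry this out explicitly for $n=2$ and $n=3$ (matching the polynomials in the Example) to fix the combinatorial pattern, then lift it to general $n$ by induction on $n$ or on $|C_\s|$.
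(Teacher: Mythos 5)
Your overall skeleton matches the paper's: at $t=0$ the identity permutation contributes $\sum_{\vc{\chi}}c_{\vc{\chi}}\prod_{\a}\delta_{y_{\a},\yp_{\a}}$, your count showing $\sum_{\vc{\chi}}c_{\vc{\chi}}=1$ is exactly the paper's normalization, and the lemma reduces to showing that the $\s\neq e$ terms have vanishing constant term. But that reduction is the entire content of the lemma, and your proposal does not prove it: it lists candidate strategies (a ``telescoping cascade'' over single sign flips, a ``reflection-type identity'', induction on $n$ or on $|C_{\s}|$) and explicitly defers to ``whatever combinatorial identity underlies the known ASM/six-vertex result''. In particular you frame the vanishing as a cancellation that occurs only after summing over $\vc{\chi}$; that is not the mechanism, and without identifying which $(\s,\vc{\chi})$ actually survive the constant-term operation there is nothing concrete to telescope. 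Checking $n=2,3$ against the example polynomials is evidence, not a proof.

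The missing idea is a positivity/monotonicity argument that kills most configurations \emph{individually}, before any use of the $c_{\vc{\chi}}$. Expand $A_{\s}^{\vc{\chi}}=\sum_{m_i\geq 0}\phi_{m_1,\dots,m_n}\prod_{\a}x_{\s(\a)}^{m_{\a}}$ with only non-negative exponents; a term $(\s,\vc{\chi})$ can contribute to the constant term only if $\chi_{\a}(y_{\bar{\s}(\a)}-\yp_{\a})+m_{\bar{\s}(\a)}=0$ for every $\a$. If $\bar{\s}$ has an inversion $(\a_1,\a_2)$ with $\a_1<\a_2$, $\bar{\s}(\a_1)>\bar{\s}(\a_2)$ on which $(\chi_{\a_1},\chi_{\a_2})=(+1,-1)$, adding the two corresponding conditions gives
\begin{equation*}
\bigl(y_{\bar{\s}(\a_1)}-y_{\bar{\s}(\a_2)}\bigr)+\bigl(\yp_{\a_2}-\yp_{\a_1}\bigr)+\bigl(m_{\bar{\s}(\a_1)}+m_{\bar{\s}(\a_2)}\bigr)=0 ,
\end{equation*}
a sum of two strictly positive and one non-negative quantity, which is impossible; the mirror case $(\chi_{\a_1},\chi_{\a_2})=(-1,+1)$ on a co-inversion is handled identically. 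This is where the strict monotonicity of $\{y_{\a}\}$ and $\{\yp_{\a}\}$ enters --- not through residues ``lining up''. Only the few configurations surviving this sieve are then paired off, using $\CT[f(x_{\a})]=\CT[f(1/x_{\a})]$, into combinations $(c_{\vc{\chi}}+c_{\vc{\chi}'})\Psi_{\a\b}$ that vanish by the sign pattern in \eqref{oddchi}--\eqref{evenchi}. Your plan contains neither the sieve nor the $x\to 1/x$ pairing, so the crucial step remains open. (The paper itself leaves the final pairing computation largely unverified, but the elimination argument above is a complete and essential step that your proposal lacks.)
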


\paragraph{Proof.} Let us call $e$ the identity element of $S_n$. That is the element of $S_n$ such
that $\sigma (\a) = \a$. Also let us define the set $\bar{S}_n = S_n \backslash \{ e \} $.
Thus for $t=0$ and considering only the identity element in the sum over $S_n$ 
of expression (\ref{r5c}), we obtain
\begin{equation}
\CT\Bigl[ \sum_{\vc{\chi}} c_{\vc{\chi}} \; \vc{x}^{\vc{\chi} \cdot ( \vc{y} - \vc{y'})} \Bigr] = 
\prod_{\a=1}^{n}\d_{y_{\a},\yp_{\a}} \sum_{\vc{\chi}} c_{\vc{\chi}} \; .
\end{equation}
From (\ref{oddchi}) and (\ref{evenchi}) we have that $\sum_{\vc{\chi}} c_{\vc{\chi}} = 1$ and to prove the lemma we are reduced 
to showing that
\begin{equation}
\label{targ}
\CT\Bigl[ \sum_{\vc{\chi}} c_{\vc{\chi}} \; \vc{x}^{- \vc{\chi} \cdot \vc{y'}}
\sum_{\s\in \bar{S}_{n}} A_{\s}^{\vc{\chi}} \; \vc{x}_{\s}^{\vc{\chi} \cdot \vc{y}} \Bigr] = 0
\end{equation}
for $y_{\a},y'_{\a} \in \mathbb{Z}: y_{\a} < y_{\a+1}, y'_{\a} < y'_{\a+1}$.

\paragraph{Terms with vanishing constant term.} The function $A_{\s}^{\vc{\chi}}$ given 
by (\ref{r5d}) and (\ref{r5e}) can be expanded as
\[
A_{\s}^{\vc{\chi}} = \sum_{m_i \geq 0} \phi_{m_1, \dots , m_n} \prod_{\a =1}^{n} x_{\s (\a)}^{m_{\a}} \; ,
\]
for any configuration $\vc{\chi}$. Thus the term inside the bracket in the LHS of (\ref{targ}) will be of the form
\[
\label{targ1}
\sum_{\vc{\chi}} \sum_{m_i \geq 0} \sum_{\s\in \bar{S}_{n}}  c_{\vc{\chi}} \phi_{m_1, \dots , m_n}^{\vc{\chi}}  \prod_{\a =1}^{n} x_{\a}^{\chi_{\a}(y_{\bar{\s}(\a)} - y'_{\a}  ) + m_{\bar{\s}(\a)}} \; .
\]
For a given configuration $(\s , \vc{\chi})$, the expression (\ref{targ1}) will produce a non-vanishing constant
term only if 
\[
\label{targ2}
\chi_{\a}(y_{\bar{\s}(\a)} - y'_{\a}  ) + m_{\bar{\s}(\a)} = 0 \quad \quad \forall \alpha \; .
\]

\bigskip

\begin{proposition} The configuration $(\s , \vc{\chi})$ such that 
$\exists \; (\a_1 , \a_2) : \a_1 < \a_2 , \bar{\s}(\a_1) > \bar{\s}(\a_2), \chi_{\a_1} = 1, \chi_{\a_2} = -1$
does not produce constant term.
\end{proposition}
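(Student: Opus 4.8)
The plan is to show that for such a configuration $(\s,\vc\chi)$ the system of equations $\chi_{\a}(y_{\bar\s(\a)}-y'_{\a})+m_{\bar\s(\a)}=0$ has no solution with all $m_{\bar\s(\a)}\ge 0$, given the orderings $y_1<y_2<\dots<y_n$ and $y'_1<y'_2<\dots<y'_n$. The key observation is that the sign of $\chi_\a$ forces a sign on $y_{\bar\s(\a)}-y'_\a$: since $m_{\bar\s(\a)}=-\chi_\a(y_{\bar\s(\a)}-y'_\a)\ge 0$, we need $y_{\bar\s(\a)}-y'_\a\le 0$ when $\chi_\a=+1$, i.e.\ $y_{\bar\s(\a)}\le y'_\a$, and $y_{\bar\s(\a)}-y'_\a\ge 0$ when $\chi_\a=-1$, i.e.\ $y_{\bar\s(\a)}\ge y'_\a$. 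So first I would record these two inequalities as the consequence of requiring a non-vanishing constant term.

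Next I would apply them at the two indices $\a_1<\a_2$ with $\bar\s(\a_1)>\bar\s(\a_2)$. From $\chi_{\a_1}=+1$ we get $y_{\bar\s(\a_1)}\le y'_{\a_1}$; from $\chi_{\a_2}=-1$ we get $y_{\bar\s(\a_2)}\ge y'_{\a_2}$. Combining with the monotonicity of both sequences: since $\a_1<\a_2$ we have $y'_{\a_1}<y'_{\a_2}$, and since $\bar\s(\a_1)>\bar\s(\a_2)$ we have $y_{\bar\s(\a_1)}>y_{\bar\s(\a_2)}$. Chaining these, $y'_{\a_2}\le y_{\bar\s(\a_2)}<y_{\bar\s(\a_1)}\le y'_{\a_1}<y'_{\a_2}$, a contradiction. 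Hence no exponent vector in the expansion can vanish simultaneously for all $\alpha$, so every monomial in that $(\s,\vc\chi)$-sector has strictly positive or strictly negative total degree in at least one $x_\alpha$ and contributes $0$ to $\CT$.

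One point to make precise is that $\CT$ is the iterated residue at the origin in each variable separately, so a monomial $\prod_\a x_\a^{N_\a}$ contributes to the constant term if and only if $N_\a=0$ for every $\a$; a single nonzero $N_\a$ already kills it. Thus the failure of the system $\chi_{\a}(y_{\bar\s(\a)}-y'_\a)+m_{\bar\s(\a)}=0$ to admit a nonnegative solution is exactly what is needed, and the argument above handles it.

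The only genuine subtlety — and the step I'd expect to need the most care — is bookkeeping with the inverse permutation: one must be careful that $m_{\bar\s(\a)}$ is the exponent attached to the variable $x_\a$ after the relabelling $x_{\s(\a)}\mapsto$ (exponent $m_\a$), so that the nonnegativity constraint $m_j\ge 0$ translates correctly into $m_{\bar\s(\a)}\ge 0$ for each $\a$. Once the indexing is pinned down, the contradiction is the short chain of inequalities above, using only that $\a_1<\a_2$, $\bar\s(\a_1)>\bar\s(\a_2)$, $\chi_{\a_1}=+1$, $\chi_{\a_2}=-1$, and the strict monotonicity of $\{y_\a\}$ and $\{y'_\a\}$.
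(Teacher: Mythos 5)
Your proposal is correct and is essentially the paper's own argument: the paper writes down the same two exponent equations at $\a_1$ and $\a_2$, sums them to get $(y_{\bar\s(\a_1)}-y_{\bar\s(\a_2)})+(y'_{\a_2}-y'_{\a_1})+(m_{\bar\s(\a_1)}+m_{\bar\s(\a_2)})=0$, and notes all three terms are nonnegative with the first two strictly positive, which is just your chain of inequalities in additive form. The indexing point you flag is handled identically in the paper's equation \eqref{targ2}.
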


\begin{proof} For such configuration the equation (\ref{targ2}) gives us the relations 
\<
\label{targ3}
y_{\bar{\s}(\a_1)} - y'_{\a_1}  + m_{\bar{\s}(\a_1)} &=& 0 \nonumber \\
-y_{\bar{\s}(\a_2)} + y'_{\a_2}  + m_{\bar{\s}(\a_2)} &=& 0 \; ,
\>
which can be summed up yielding the identity
\[
\label{targ4}
( y_{\bar{\s}(\a_1)} - y_{\bar{\s}(\a_2)} ) + (y'_{\a_2} - y'_{\a_1}) + (m_{\bar{\s}(\a_1)} + m_{\bar{\s}(\a_2)} ) = 0 \; .
\]
Now since $m_{\a} \geq 0$, $y_{\a} < y_{\a +1}$ and $y'_{\a} < y'_{\a +1}$, the Eq. (\ref{targ4}) can not be satisfied
and consequently (\ref{targ2}) does not hold.
\end{proof}
\begin{remark} Analogously a configuration $(\s , \vc{\chi})$ such that
$\exists \; (\a_1 , \a_2) : \a_1 > \a_2 , \bar{\s}(\a_1) < \bar{\s}(\a_2), \chi_{\a_1} = -1, \chi_{\a_ 2} = 1$ does
not produce a constant term as well.
\end{remark}

\paragraph{Terms with non-vanishing constant term.} Considering only the non-null coefficients
according to (\ref{oddchi}) and (\ref{evenchi}), the components whose constant term does not vanish in the LHS of (\ref{targ}), reorganise
as 
\[
\label{reorg1}
\sum_{\stackrel{1 \leq \a \leq \frac{n+1}{2}}{\frac{n+1}{2} < \b \leq n }} 
\left( c_{( -1, \dots, -1 , \chi_{\a}, -1, \dots, -1 )} + c_{(-1, \dots, -1 , \chi_{\beta}, -1, \dots, -1 )} \right) \Psi_{\a \b} \qquad \;\; ( \chi_{\a , \b} = +1 )
\]
for $n$ odd and as
\[
\label{reorg2}
\sum_{\stackrel{2 \leq \a \leq \frac{n}{2}+1 }{1 \leq \b \leq \frac{n}{2}-1}} 
\left( c_{( +1, -1, \dots, -1 , \chi_{\a}, -1, \dots, -1 )} + c_{( -1, \dots, -1 , \chi_{\beta}, -1, \dots, -1 )} \right) \bar{\Psi}_{\a \b} \qquad \;\; ( \chi_{\a , \b} = +1 )
\]
for $n$ even. Although it is a lengthy computation, the expressions (\ref{reorg1}) and (\ref{reorg2})
follows from the property $\CT \left[  f(x_{\a}) \right] = \CT \left[  f(1/x_{\a}) \right]$ for any Laurent
polynomial $f$. The form of the functions $\Psi_{\a \b}$ and $\bar{\Psi}_{\a \b}$ will not be required here but they consist
of the explicit evaluation of the constant terms in (\ref{targ}). It is also important to remark here that when
evaluating the constant term on $\Z[\om](x)$ as defined in (\ref{ctint}) using the residue formula, the terms of the form
$(1 - \om + Q)^{-1}$ with $Q \in \Z[x]$ and $\mbox{CT} [Q] = 0$ need to be expanded as $\sum_{n=0}^{\infty} Q^{-n-1} (\om - 1)^n$
in order to ensure that all poles at the origin are being captured by the integration contours. Finally,
we can see that the expressions (\ref{reorg1}) and (\ref{reorg2}) vanish for coefficients $c_{\vc{\chi}}$ respectively 
given by (\ref{oddchi}) and (\ref{evenchi}). This completes our proof.

\section{Concluding remarks}	
\label{sec:conclusions}

The main result of this work is the constant term formula (\ref{eq1a}) counting
the number of lattice paths generated by an arbitrary number of osculating walkers. 
This formula has its origins in a Bethe ansatz like expression but it still contains modifications
from the usual Bethe ansatz. More precisely, the sum over variables $\chi$ present
in (\ref{eq1a}) is a new feature of our solution and it has been introduced in order to fulfil
the initial condition (\ref{r4}).

It is worth remarking here that the case of three osculating walkers had
been previously considered in \cite{bousquet06} through a step by step decomposition
of osculating configurations. Although the method of \cite{bousquet06} can be formally extended
for arbitrary number of osculating walkers, the solution of the obtained equation
seems to be out of reach. 

The list of problems related to osculating lattice paths is still not as abundant
as the case of vicious walkers but new connections have emerged recently. For instance,
in the work \cite{korff12} it was demonstrated that the counting of rational curves intersecting 
Schubert varieties of the Grassmannian are related to the counting of osculating lattice paths
on the cylinder.
Although the enumeration of ASM is well known and three different proofs are available \cite{Zeilberger96, kuperberg95, fischer05}, a purely
combinatorial proof remains an open problem which we hope this work to shed some light upon.

\section*{Acknowledgements}	 
The authors thank the Australian Research Council (ARC) and 
the Centre of Excellence for Mathematics and Statistics of
Complex Systems (MASCOS) for financial support.

\bibliographystyle{plain}

\bibliography{osculatingBibTex}

\end{document}